\newtheorem{theorem}{Theorem}
\newtheorem*{theorem*}{Theorem}
\newtheorem*{claim*}{Claim}
\newtheorem*{corollary*}{Corollary}
\newtheorem*{lemma*}{Lemma}
\theoremstyle{definition}
\newtheorem{definition}[theorem]{Definition}
\newtheorem*{definition*}{Definition}
\newtheorem*{example*}{Example}
\newtheorem*{experiment*}{Experiment}
\newtheorem*{notation*}{Notation}
\newtheorem{proviso}{Proviso}
\newtheorem{problem}{Problem}
\newtheorem*{remark*}{Remark}
\newtheorem*{terminology*}{Terminology}
\newcommand{\A}{\smallskip\noindent{\tt A: }}
\newcommand{\BA}{\mathrm{BA}}
\newcommand{\C}{\mathbb C}
\newcommand{\CO}{\ensuremath{\mathrm{CO}}}
\newcommand\given[1][]{\:#1\vert\:}
\renewcommand{\H}{\ensuremath{\mathcal H}}
\newcommand{\ket}[1]{\ensuremath{|#1\rangle}}
\newcommand{\Ob}{\text{Ob}}
\renewcommand{\P}{\mathcal P}
\renewcommand\phi{\varphi}
\newcommand{\Q}{\smallskip\noindent{\tt Q: }}
\newcommand{\cS}{\ensuremath{\mathcal S}}
\title{Negative probabilities, II\\ 
What they are and what they are for
}
\author{Andreas Blass}
\address{Mathematics Department\\
University of Michigan\\
Ann Arbor, MI 48109--1043, U.S.A.}
\email{ablass@umich.edu}
\author{Yuri Gurevich}
\address{Computer Science and Engineering\\
University of Michigan\\
Ann Arbor, MI  48109-2121, U.S.A}
\email{gurevich@umich.edu}
\begin{document}
\maketitle

\begin{quote}\raggedleft\small\it
What are numbers and what are they for?
\footnote{Was sind und was sollen die Zahlen?}\\
--- Richard Dedekind, 1888
\end{quote}

\begin{abstract}
A signed probability distribution may extend a given traditional probability
from observable events to all events. We formalize and illustrate this approach. We also illustrate its limitation. We argue that the right question is not what negative probabilities are but what they are for. 
\end{abstract}

\section{Introduction}
\label{sec:intro}

The idea of negative probabilities arose in quantum mechanics \cite{Wigner,Moyal, Dirac,Feynman}. This is not surprising. The weirdness of quantum mechanics required bold fresh ideas. The physicists have been using signed probability distributions primarily in connection with phase spaces for quantum systems \cite{Zachos} as suggested by Wigner in his 1932 paper \cite{Wigner}. 

The present article was provoked by the paper ``An operational interpretation of negative probabilities and no-signaling models'' by Samson Abramsky and Adam Brandenburger \cite{AB}. It occurred to us that the right question may not be how to interpret negative probabilities but how to employ them. 

Not everybody thinks that the idea of negative probabilities is a good one. Let's hear a critic. In 1935, Albert Einstein, Boris Podolsky and Nathan Rosen surmised that quantum mechanics is incomplete, i.e., that some hidden variables are missing \cite{EPR}. John Bell proved that local hidden-variable theories contradict quantum-mechanical statistics \cite{Bell}. It seems natural to try to save the hidden-variable approach by means of negative probabilities, and here is what our critic, Itamar Pitowsky, has to say about that  \cite[page~148]{Pitowsky}.

\begin{quote}
What makes the classical hidden variable theories ``classical'' is the
identification of ``mixtures of hidden variable states'' as probability measures\dots The logical step to take \dots is \dots to use (cheap) tricks such as negative ``probability'', or even complex ``probability'' values. Formally we may be able to ``solve'' our problem, but then the term ``probability'' loses completely its meaning. \dots It is absurd to talk about an urn containing $-17$ red balls or $3e^{i\pi/12}$ wooden balls.
\end{quote}

We wrote about negative probabilities once \cite{G224}. 
Here we wish to defend the cheap trick of negative probabilities.
Indeed, the standard frequential interpretation of probabilities does
not apply to negative probabilities. But here is an appetizing
analogy. For a long time, the standard interpretation of numbers was
quantity. It is absurd to talk about the quantity of
$3e^{i\pi/12}$. Formally the complex numbers allow us to ``solve'' equations like $x^2=-1$ but, one might argue, the term ``number'' loses its meaning. 

The mathematical trick of introducing complex numbers paid off richly. Some hard number-theoretic problems have been solved using the methods of complex analysis. Eventually it even became possible to give a physical meaning to complex numbers, e.g., a complex number $a+bi$ can be interpreted as impedance where the real part $a$ is resistance and the imaginary part $b$ is the reactance \cite{W-EI}, and complex numbers of absolute value $\le1$ can be interpreted as quantum amplitudes. 

The original purpose of complex numbers was to solve certain algebraic equations with real coefficients. What is the corresponding purpose of negative probabilities? What plays the role of algebraic equations? These are the questions addressed in this paper. Our proposal is admittedly --- and provably --- limited. But it supports some of the usages of negative probabilities in the literature. One has to start somewhere.

For simplicity and to separate concerns, we restrict attention to finite spaces.

\section{Signed probability spaces}
\label{sec:prob}

We use, as a running example, a scenario due to Piponi \cite{Piponi}, which ``while artificial, is appealingly simple, and does convey some helpful intuitions'' \cite{AB}. 

\begin{example*}[Piponi's scenario]
A machine produces boxes with pairs $(l,r)$ of bits, each bit viewable through its own door. Somehow it is also possible to test whether the two bits are equal.
The probability of each possible combination of two bits is given by the following table:
\begin{align*}
&00 &&01 &&10 &&11\\
-&1/2 &&1/2 &&1/2 &&1/2
\end{align*}
While the table is not available to the observer, the following three experiments are available.
\begin{enumerate}
\item Look through the left door. This allows you to find out eventually that\\  
$\P(l=1)=1$ and $\P(l=0)=0$.
\item Look through the right door. This allows you to find out eventually that \\ 
$\P(r=1)=1$ and $\P(r=0)=0$.
\item Test whether the two bits are equal. This allows you to find out eventually that 
$\P(l\ne r) = 1$ and $\P(l=r) = 0$.
\end{enumerate}
Notice that these six discovered probabilities are all nonnegative (as in traditional probability theory), that they match the probabilities computed from the table above, but that they are not mutually consistent in traditional probability theory. 
\end{example*}

The following definition reflects our intent to work with finite spaces.

\begin{definition} By a \emph{signed probability space} $S$ we mean a pair $(\Omega,\P)$ where $\Omega$ is a nonempty set and $\P$ is a real-valued function on $2^\Omega$ such that the following probability laws hold.
\begin{enumerate}
\item[PL1.] $\P(\Omega)=1$.
\item[PL2.] If $e_1,e_2\subseteq\Omega$ and $e_1\cap e_2=\emptyset$ then $\P(e_1\cup e_2) = \P(e_1) + \P(e_2)$.
\end{enumerate}
If, in addition, we have
\begin{enumerate}
\item[PL3.] $\P(e)\ge0$ for all $e\subseteq\Omega$,
\end{enumerate}
then $\P$ and $S$ are \emph{traditional}. 
\hfill$\triangleleft$
\end{definition}

\begin{terminology*}
Here $\Omega$ is the \emph{sample space}, and its elements are \emph{sample points} or \emph{outcomes}. Subsets of $\Omega$ are \emph{events}. $\P$ is a \emph{signed probability distribution}. $\P(e)$, even if it is negative, is called the \emph{probability} of $e$. For brevity, when $\omega\in\Omega$, we write $\P(\omega)$ to mean $\P(\{\omega\})$.
\end{terminology*}

\begin{notation*} 
The complement $\Omega-e$ of an event $e$ will be denoted $\bar e$. The collection of all subsets of a set $s$ will be denoted $2^s$.  Disjoint union of sets $s_1,s_2$ will be denoted $s_1+s_2$. If $S$ is a set of sets then $\displaystyle \bigcup S = \bigcup_{s\in S} s$.
\end{notation*}

In the example above, the sample space $\Omega$ consists of the four binary strings $00, 01, 10, 11$,  and the probability distribution $\P$ is given by the table.

\begin{definition}
A \emph{test} for a signed probability space $S = (\Omega,\P)$ is given by (and, mathematically speaking, can be identified with) a partition of $\Omega$ into parts of nonnegative probability. 
\end{definition}

An \emph{execution} of a test picks out one of its parts. The example above explicitly exhibits three tests. 

\begin{quote}
\Q Normally, in traditional probability theory, an execution of a probability trial picks out an outcome $\omega$ with probability $\P(\omega)$. Why don't you do that in general? Require that only outcomes of nonnegative probability are picked out.

\A The proposed test is impossible in the case of nontraditional probability distribution. Notice that, in the example, the probability of the event $e^+ = \{\omega\in\Omega: \P(\omega)\ge0\}$ is more than 1. What would it mean to pick an outcome from $e^+$ according to a distribution with total probability $>1$?  Our definition of test intends to reflect measurements in quantum mechanics.
\end{quote}

\section{Observation frames}
\label{sec:frame}

Our goal in this section is to formalize the notions of an observable event and a coobservable set of events. 
Intuitively, a set $E$ of events is coobservable if there is a test $\tau_E$ that allows us to observe, for all $e\in E$, whether $e$ occurred or not. Further, $e$ is observable if the singleton set $\{e\}$ is coobservable. 

\begin{definition} An \emph{observation frame} is a pair $(\Omega,\CO)$, such that $\Omega$ is a nonempty set, CO is a collection of subsets of $\Omega$, and the following axioms hold:
\begin{itemize}
\item[CO1.]\ If $X\subseteq Y\in\CO$ then $X\in\CO$.
\item[CO2.]\ If $e_1,e_2\in X\in\CO$ then $X\cup\{\overline{e_1}\}\in\CO$ and $X\cup\{e_1\cup e_2\}\in\CO$. \hfill$\triangleleft$
\end{itemize}
\end{definition}

\begin{terminology*} Event sets in CO are \emph{coobservable}. An event $e$ is \emph{observable} if the set $\{e\}$ is coobservable. For brevity, maximal coobservable sets, maximal in the inclusion order, will be called \emph{ensembles}. \hfill$\triangleleft$
\end{terminology*}

\begin{notation*}
The set $\bigcup\CO$ of the observable events will be denoted Ob.
\end{notation*}

It follows from the definition that every ensemble is a Boolean algebra of subsets of $\Omega$. 

\begin{quote}
\Q How do you justify CO2.

\A Let $e_1,e_2\in X\in\CO$. Since $X$ is coobservable, there exists a test $\tau_X$ that allows us to observe, for all $e\in X$, whether $e$ occurred or not. Therefore $\tau_X$ also allows us to observe whether $\overline{e_1}$ occurred or not: it occurred if and only if $e_1$ didn't occur. And $\tau_X$ allows us to observe whether $e_1\cup e_2$ occurred or not: it occurred if and only if $e_1$ occurred or $e_2$ occurred.   
\end{quote}

\begin{proviso}[Finiteness]
By default, observation frames are finite, i.e., their sample spaces are finite. 
\end{proviso}

\begin{remark*}
The definition of observation frames should be more general by excluding $\Omega$ and dealing only with coobservation.  But, at this initial point of our investigation, we are willing to sacrifice the generality.
\hfill $\triangleleft$
\end{remark*}

Since every coobservable set is a subset of an ensemble and every subset of an ensemble is coobservable, the whole collection CO of coobservable event sets is determined by the ensembles.
 
Further, due to the finiteness proviso, the Boolean algebra of any ensemble $E$ is atomic. The atoms partition the sample space; let us call that partition $\Pi_E$. The partition $\Pi_E$ uniquely determines the ensemble $E$. 
Thus the collection CO can be given by the table of ensemble-induced partitions $\Pi_E$.

\begin{example*}[The observation frame of Piponi's scenario]
Piponi's scenario gives rise to the following observation frame. 
The sample space $\Omega$ consists of the four binary strings $00, 01, 10, 11$, and there are three ensembles giving rise to the following partitions:
\begin{align*}
&\big\{\{00,01\},\{10,11\}\big\} \\ 
&\big\{\{00,10\},\{01,11\}\big\}\\   
&\big\{\{00,11\},\{01,10\} \big\}
\end{align*}
\end{example*}

Furthermore, there is the least common refinement $\Pi$ of all the ensemble-induced partitions; it has the smallest number of parts. Notice that, for any part $e\in\Pi$, different outcomes in $e$ cannot be distinguished. For all practical purposes, members of $\Pi$ can be viewed as singletons.

\begin{proviso}[Fat outcomes] 
By default, each part of the common refinement contains a single outcome.
\end{proviso}

\section{Observation spaces}
\label{sec:space}

\begin{definition} An \emph{observation space} is a triple $(\Omega,\CO,\P)$, such that 
\begin{itemize}
\item $(\Omega,\CO)$ is an observation frame,
\item $\P$ is a real-valued function on $\Ob=\bigcup\CO$ satisfying the following versions of the probability laws PL1--PL3.
\begin{itemize}
\item $\P(\Omega)=1$.
\item If $(e_1,e_2)\in\CO$ and $e_1\cap e_2=\emptyset$ then $\P(e_1 + e_2) = \P(e_1) + \P(e_2)$. 
\item $\P(e)\ge0$ for all $e\in\Ob$.
\hfill$\triangleleft$
\end{itemize}
\end{itemize}
\end{definition}

Since each observable event belongs to some ensemble, it suffices to define $\P$ on every ensemble (and ensure that every event gets the same probability in every ensemble that contains it). Since any ensemble $E$ is a Boolean algebra of sets, the probability distribution on $E$ is determined by the probabilities assigned to the parts of the ensemble partition $\Pi_E$.

\begin{problem}[Extension] Given an observation space $(\Omega,\CO,\P)$,  do the following.
\begin{enumerate}
\item Decide whether there is a traditional probability distribution that extends $\P$ from observable events to all events.
\item If such a traditional extension exists then find one. 
\item Otherwise decide whether there is a signed probability distribution that extends $\P$ from observable events to all events.
\item If such a signed extension exists then find one.
\end{enumerate}
\end{problem}

The Fat-outcomes proviso of the previous section makes the Extension Problem trivial in the case of a single ensemble: the given $\P$ is already defined on all events. 
If there are exactly two ensembles $A$ and $B$, imposing partitions $\Pi_A = \{A_1, \dots, A_m\}$ and $\Pi_B = \{B_1, \dots, B_n\}$, and if every intersection $A_i\cap B_j\ne\emptyset$, then there is a traditional solution for the Extension Problem: set $\P(A_i\cap B_j) = \P(A_i)\cdot\P(B_j)$. 

\begin{example*}[The observation space of Piponi's scenario]
The observation frame of the scenario is described in the previous Example. It has exactly three ensembles. It remains only to specify the probability distribution $\P$. It is given by the following table.
\begin{align*}
\P\{00,01\}=0\quad &\P\{10,11\}=1\\
\P\{00,10\}=0\quad &\P\{01,11\}=1 \\
\P\{00,11\}=0\quad &\P\{01,10\}=1 
\end{align*}
\end{example*}

The following simple theorem will turn out to be useful.

\begin{theorem}[Symmetry]\label{thm:sym}
Let $G$ be a group of automorphisms of an observation space $O=(\Omega,\CO,\P)$. If $Q$ is a possibly-signed extension of $\P$ to all events, then so is the average
\[ R(e) = \frac1{|G|} \sum_{g\in G} Q(ge). \]
\end{theorem}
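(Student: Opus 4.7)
The plan is to verify three things: that $R$ satisfies PL1, that $R$ satisfies PL2, and that $R$ agrees with $\P$ on observable events. Each verification will exploit the fact that an automorphism $g$ of the observation space $O$ must be a permutation of $\Omega$ that (i) preserves $\CO$ (so $ge \in \Ob$ whenever $e \in \Ob$) and (ii) preserves $\P$ on observable events (so $\P(ge) = \P(e)$ for every $e \in \Ob$). Since the definition of ``automorphism'' is not spelled out in the excerpt, I would first record this as the working definition; the rest of the proof then proceeds by cleanly separating the three claims.

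For PL1 I would observe that each automorphism fixes $\Omega$ setwise (it is a bijection of $\Omega$), so $Q(g\Omega) = Q(\Omega) = 1$ for every $g$, and averaging yields $R(\Omega) = 1$. For PL2 I would use that $g$, being a bijection, sends disjoint sets to disjoint sets and respects unions: if $e_1 \cap e_2 = \emptyset$, then $g(e_1 \cup e_2) = ge_1 \cup ge_2$ with $ge_1 \cap ge_2 = \emptyset$, so $Q(g(e_1 \cup e_2)) = Q(ge_1) + Q(ge_2)$ by additivity of $Q$. Summing over $g \in G$ and dividing by $|G|$ gives $R(e_1 \cup e_2) = R(e_1) + R(e_2)$. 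Both of these steps are just routine manipulations riding on the linearity of the averaging operator.

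The only step that uses the automorphism hypothesis in an essential way is showing that $R$ extends $\P$ on $\Ob$. For an observable event $e$, each image $ge$ is again observable (by (i) above), so $Q(ge) = \P(ge)$ because $Q$ extends $\P$; and $\P(ge) = \P(e)$ by (ii). Hence every term in the averaged sum equals $\P(e)$, and $R(e) = \P(e)$.

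I do not expect a genuine obstacle here; the theorem is essentially a standard group-averaging trick, and the only place one must be slightly careful is in making the ``automorphism of an observation space'' hypothesis precise enough to yield both the preservation of $\CO$ (needed so that $ge \in \Ob$ in the extension step) and the preservation of $\P$ on observable events (needed to collapse the average to $\P(e)$). Once that definition is pinned down, the three verifications are one line each.
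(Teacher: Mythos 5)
Your proof is correct and takes essentially the same route as the paper: verify $R(\Omega)=1$, verify additivity by pushing the disjoint union through the bijection $g$ and averaging, and collapse the average to $\P(e)$ on observable events using that automorphisms preserve observability and $\P$. If anything, your additivity check for arbitrary disjoint events is exactly what is needed (the paper's wording restricts to coobservable pairs, though its computation works in general), and your explicit statement of what ``automorphism of an observation space'' must mean is a reasonable way to handle the fact that the paper leaves that definition implicit.
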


\begin{proof} If event $e$ is observable then $Q(ge)
= Q(e) = \P(e)$ for any  $g\in G$; hence $R(e) = \P(e)$. It remains to
check that $R$ is a possibly-signed probability distribution. Since
$\Omega$ is observable, $R(\Omega) = \frac1{|G|} \sum_{g\in G}
\P(\Omega) = \P(\Omega) = 1$.  Further,  if events $e_1,e_2$ are
disjoint and coobservable then
\[
R(e_1\cup e_2) 
= \frac1{|G|} \sum_{g\in G} Q(g(e_1\cup e_2))
 = \frac1{|G|} \sum_{g\in G} [Q(ge_1) + Q(ge_2)]
 = R(e_1) + R(e_2).
\]
\end{proof}

\section{Bell's theorem and negative probabilities}
\label{sec:bell}

Quantum theory is contextual in the sense that the value of an observable $O$,
measured as a part of one context, may differ from the value of $O$ measured as a
part of another context. Attempts to avoid contextuality may lead to negative probabilities. This will be illustrated in the present section. We start with two  Gedankenexperiments exhibiting the contextuality of quantum mechanics.

Prepare the state
$\frac1{\sqrt2} \Big(\ket{01} - \ket{10}\Big)$,
known as the spin singlet state, 
of a pair of spin 1/2 particles, e.g., electrons. 
Here \ket0 represents spin up in the $z$ direction, and \ket1 represents spin down. Choose an arbitrary direction $\bf a$ and measure spin in direction $\bf a$ on both particles, getting $+\frac12$ if the spin is up or $-\frac12$ if the spin is down. According to quantum mechanics, the results of the two measurements are opposite to one another: one measurement yields $+\frac12$ and the other  $-\frac12$ \cite[Box~2.7]{NC}. This is true even if each measurement is performed outside of the lightcone of the other and thus cannot possibly affect the other measurement. 

Alternatively, one can work with photons, which are spin 1 particles. In this connection, see Figure~1 which, together with the caption, is borrowed from \cite{Aspect}.

\vspace{4ex}\hspace{-4.2em}
\includegraphics[scale=0.93,trim=0 8.5in 0 0.3in,clip]{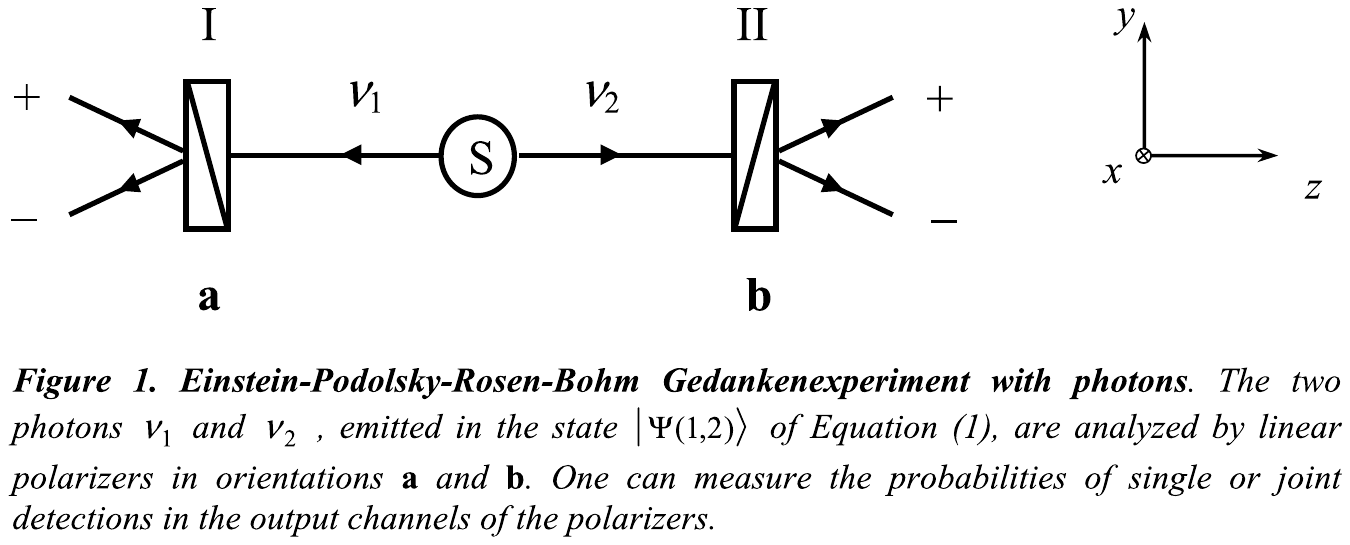}

\vspace{2ex}\noindent
The two photons are moving along the $z$ axis. Formula $\ket{\Psi(1,2)}$ in the figure is 
$\frac1{\sqrt2} \Big(\ket{00} + \ket{11}\Big)$ 
where \ket0 and \ket1 are unit vectors
in the Hilbert space for the quantum system of one photon. \ket0 and \ket1 correspond to polarization in the directions of the $x$ and $y$ axes respectively. If the orientations  $\bf a, \bf b$ of the analyzers in the figure coincide then the two measurement outcomes are guaranteed to coincide: both are $+1$ or both are $-1$.

Einstein, Podolsky and Rosen (EPR) saw contextuality as an indication that quantum mechanics is incomplete, i.e., that some hidden variables are missing \cite{EPR}. Bell famously proved that local hidden-variable theories contradict quantum-mechanical statistics \cite{Bell}. Subsequent experiments supported the latter. 

In general, the orientations $\bf a$, $\bf b$ of the two analyzers on Figure~1 may be different.
If $\theta$ is the angle $\angle(\bf a,\bf b)$ between vectors $\bf a$ and $\bf b$  then the two outcomes $(+1,+1)$ and $(-1,-1)$ in which the two measurements give us the same result, have probability $\frac12\cos^2\theta$ each, and the other two outcomes $(+1,-1)$ and $(-1,+1)$ have probability $\frac12\sin^2\theta$ each \cite[\S6-2]{Peres}. In particular, if $\bf a = \bf b$ then the probability of getting the same result is 1.

To illustrate how Bell's Theorem leads to negative probabilities, David Schneider played with three orientations $\vec A, \vec B, \vec C$ in his blog post \cite{Schneider}. The angle $\angle(\vec A, \vec C) = 3\pi/8$, and $\vec B$ is in between so that the angle $\angle(\vec A, \vec B) = \pi/4$ and $\angle(\vec B, \vec C) = \pi/8$. (We swapped Schneider's $\vec B$ and $\vec C$ so that our $\vec B$ is between $\vec A$ and $\vec C$.) He arrived at negative probabilities, implicitly assuming that (in terms of \S\ref{sec:space}) there is some solution of the appropriate Extension Problem. In the rest of this section, we explain Schneider's derivation and then address the implicit assumption.

Imagine that we work with a noncontextual hidden-variable theory where the measurements are determined locally, ``so the first thing we need to do is momentarily forget all our knowledge of quantum mechanics,'' \cite[page~148]{NC}. But we do use the data obtained in three optical experiments of the kind depicted in Figure~1.
\begin{enumerate}
\item Experiment $AB$ involves orientations $\vec A$ and $\vec B$,
\item Experiment $BC$ involves orientations $\vec B$ and $\vec C$, 
\item Experiment $AC$ involves orientations $\vec A$ and $\vec C$.
\end{enumerate}
This gives rise to the following observation space $T = (\Omega,\CO,\P)$.

\smallskip\noindent{\tt Sample space.}
The sample space $\Omega$ consists of eight sample points 
\begin{align*}
&1, a.k.a.\quad +++, && 2, a.k.a.\quad ++-, &&3, a.k.a.\quad +-+, &&4,\quad a.k.a. +--,\\
&5, a.k.a.\quad -++, && 6, a.k.a.\quad -+-, &&7, a.k.a.\quad --+, &&8,\quad a.k.a. ---.
\end{align*}
In the three-letter words $abc$ in alphabet $\{+,-\}$, the first letter $a$ is the result of measuring the spin in the $\vec A$ direction. Similarly for the second letter $b$ and the third letter $c$ using the $\vec B$ and $\vec C$ directions. 

\smallskip\noindent{\tt Coobservation.}
The first two letters $a,b$ of outcomes $abc$ give rise to an equivalence relation 
$ abc\equiv a'b'c'\iff (a=a'\land b=b')$ 
whose four equivalence classes 
$ \{1,2\}, \{3,4\}, \{5,6\}, \{7,8\} $
form a partition $\Pi_{12}$ of $\Omega$. Thanks to experiment $AB$, the four parts of $\Pi_{12}$ are coobservable.
Define partitions $\Pi_{23}$ and $\Pi_{31}$ similarly. The four parts of $\Pi_{23}$ are coobservable thanks to the experiment $BC$, and the four parts of $\Pi_{13}$ are coobservable thanks to the experiment $AC$.    

The four parts of any partition $\Pi_{ij}$ generate a Boolean algebra $E_{ij}$ of subsets of $\Omega$. Define
\[ \CO = \big\{ E: E\subseteq E_{ij}\text{ for some }i,j\big\}, \]
so that each $E_{ij}$ is an ensemble.

\smallskip\noindent{\tt Probability distribution.}
Since the experimental results support quantum mechanics, the experiments $AB$, $BC$ and $AC$ produce  results approximating the following three tables.

\begin{align*}\label{AB}
\P(++\pm) &= \P\{1,2\} = \frac12 \cos^2(\pi/4) = 1/4,\\
\P(--\pm) &= \P\{7,8\} = \frac12 \cos^2(\pi/4) = 1/4,\tag{AB}\\
\P(+-\pm) &= \P\{3,4\} = \frac12 \sin^2(\pi/4) = 1/4,\\
\P(-+\pm) &= \P\{5,6\} = \frac12 \sin^2(\pi/4) = 1/4.
\end{align*}

\begin{align*}\label{BC}
\P(\pm++) &= \P\{1,5\} = \frac12 \cos^2(\frac\pi8) = \frac18(2+\sqrt2),\\
\P(\pm--) &= \P\{4,8\} = \frac12 \cos^2(\frac\pi8) = \frac18(2+\sqrt2),
             \tag{BC}\\
\P(\pm+-) &= \P\{2,6\} = \frac12 \sin^2(\frac\pi8) = \frac18(2-\sqrt2),\\
\P(\pm-+) &= \P\{3,7\} = \frac12 \sin^2(\frac\pi8) = \frac18(2-\sqrt2).
\end{align*}

\begin{align*}\label{AC}
\P(+\pm+) &= \P\{1,3\} = \frac12 \cos^2(3\pi/8) = \frac18(2-\sqrt2),\\
\P(-\pm-) &= \P\{6,8\} = \frac12 \cos^2(3\pi/8) = \frac18(2-\sqrt2),
             \tag{AC}\\
\P(+\pm-) &= \P\{2,4\} = \frac12 \sin^2(3\pi/8) = \frac18(2+\sqrt2)\\
\P(-\pm+) &= \P\{5,7\} = \frac12 \sin^2(3\pi/8) = \frac18(2+\sqrt2).
\end{align*}

\medskip
Let $[a=b]$ be the event that the measurements for orientations $\vec A, \vec B$ coincide, and $[a\ne b]$ be the complementary event, that these measurements are distinct. Define events $[b=c], [b\ne c], [a=c]$ and $[a\ne c]$ similarly.
We have
\begin{align*}
[a=b] &= (++\pm)\cup(--\pm) = \{1,2,7,8\}, 
&&\P[a=b]= 1/2,\\
[a\ne b] &= (+-\pm)\cup(-+\pm) = \{3,4,5,6\}, 
&&\P[a\ne b]= 1/2,\\
[b=c] &= (\pm++)\cup(\pm--) = \{1,4,5,8\}, 
&&\P[b=c] = \cos^2(\pi/8) = \frac14(2+\sqrt2),\\
[a=c] &= (+\pm+)\cup(-\pm-) = \{1,3,6,8\}, 
&&\P[a=c]= \sin^2(\pi/8) = \frac14(2-\sqrt2).
\end{align*}

\smallskip
Suppose that a possibly-signed probability distribution $Q$ extends $\P$ to all events. Let $U$ be the nonobservable event $\{3,6\}$. The following computation shows that $Q$ cannot be traditional. 
\begin{align*}
2Q(U) &= \big[Q(U) + Q([a=c]-U)\big] + \big[Q(U) + Q([a\ne b]-U)\big] \\
&\qquad- \big[Q([a=c]-U) + Q([a\ne b]-U)\big]\\
&= \P[a=c] + \P[a\ne b] - \P[b=c] = \frac14(2-\sqrt2) +\frac12 -\frac14(2+\sqrt2)\\ &= \frac12(1-\sqrt2)\\
Q(U) &= \frac14(1-\sqrt2) < 0.
\end{align*}

Now let's address the question whether there is any solution of the Extension Problem in our case. 

Consider the transformation $g$ of $\Omega$ that, for any outcome $abc$, replaces every letter by its opposite. For example, $g(+-+)= -+-$. It is easy to see that $g$ is an automorphism of the observation space $T$. By Theorem~\ref{thm:sym}, the average $R(e) = \frac12(Q(ge) + Q(e))$ is a signed probability distribution that extends $\P$ to all events. 

	Since $Q\{3,6\} = \frac14(1-\sqrt2)$, we have:
\begin{align*}
&R(3) = R(6) = \frac12(Q(3)+Q(6)) = \frac12Q\{3,6\} = \frac18(1-\sqrt2)\\
&R(1) = R(8) = \frac18(2-\sqrt2) - \frac18(1-\sqrt2) = \frac18
  &&\text{by \eqref{AC}}\\
&R(2) = R(7) = \frac14 - \frac18 = \frac18 &&\text{by \eqref{AB}}\\
&R(4) = R(5) = \frac14 - \frac18(1-\sqrt2) = \frac18(1+\sqrt2) 
  &&\text{by \eqref{AB}}
\end{align*}

To prove that $R$ is consistent with $\P$, it suffices to check that these probabilities satisfy the constraints \eqref{AB}, \eqref{BC} and \eqref{AC} where $\P\{k,l\}$ is replaced with $R(k)+R(l)$. 

The \eqref{AB} constraints and the first two of the \eqref{AC} constraints are satisfied in a trivial way (because of the way they have been used to compute the outcome probabilities). $R(2)+R(4) = R(7)+R(5) = \frac18 + \frac18(1+\sqrt2) = \frac18(2+\sqrt2) =\P\{2,4\} = \P\{5,7\}$, and so the remaining two \eqref{AC} constraints are satisfied. We check the \eqref{BC} constraints.
\begin{align*}
&R(1)+R(5) = R(8)+R(4) = \frac18+\frac18(1+\sqrt2) = \frac18(2+\sqrt2) 
= \P\{1,5\} = \P\{4,8\}\\
&R(2)+R(6) = R(3)+R(7) = \frac18 + \frac18(1-\sqrt2) = \frac18(2-\sqrt2)
\end{align*}

\section{Hardy's Gedankenexperiment: Contextuality without negativity}
\label{sec:hardy}

The previous section may give one the idea that contextuality always leads to negative probabilities. In this section, building on Lucien Hardy's article \cite{Hardy} and also influenced by David Mermin's article \cite{Mermin}, we illustrate that this is not so. 

\begin{experiment*} 
Two one-qubit particles emerge from a common source heading for two far apart detectors. Aside from the passage of the particles from the source to the detectors, there are no connections between the source and either detector or between the two detectors.
The following picture is borrowed from \cite{Mermin} (and slightly modified). 

\hspace{-7em}
\includegraphics[scale=0.78,trim=0in 8.5in 0in 1.2in,clip]{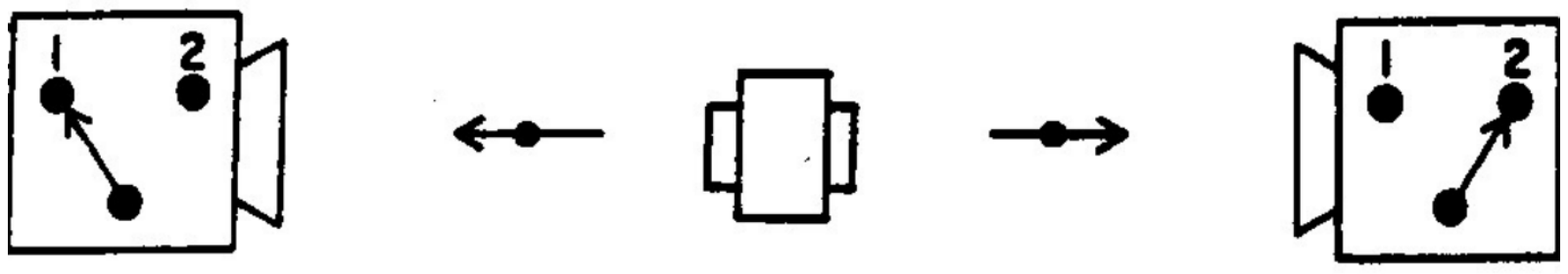}

\noindent
Each of the detectors is randomly set, ahead of time, to one of two modes, indicated by ``1'' and ``2'' in the picture. Four cases arise, two possible settings on each of the two detectors.

When a particle arrives at a detector, that detector performs a measurement and exhibits the result. In mode~1, observable $Z$ is measured. Its value is $+1$ in state \ket0 and $-1$ in state \ket1. In mode~2, observable $X$ is measured. Its value is $+1$ in state $\ket+ = \frac1{\sqrt2}(\ket0 + \ket1)$ and $-1$ in state $\ket- = \frac1{\sqrt2}(\ket0 - \ket1)$. Initially, the two particles are in state
\begin{equation}\label{1}
 \ket\psi = \frac1{\sqrt3}\Big(\ket{01}+\ket{10}-\ket{00}\Big).
\end{equation} 
That completes the description of the experiment. \hfill $\triangleleft$
\end{experiment*}

\bigskip
In the rest of the section, we analyse the experiment. The initial state is given to us in the basis \ket{00}, \ket{01}, \ket{10}, \ket{11}. It will be convenient to express it in three additional bases. In the basis \ket{0+}, \ket{0-}, \ket{1+},\ket{1-}, we have
\begin{equation}\label{2}
\sqrt3\ket\psi 
= -\ket0(\ket0-\ket1) + \ket1(\ket+ + \ket-)
= -\sqrt2\ket{0-} + \frac1{\sqrt2}\ket{1+} + \frac1{\sqrt2}\ket{1-}.
\end{equation}
In the basis \ket{+0}, \ket{+1}, \ket{-0}, \ket{-1}, we have
\begin{equation}\label{3}
 \sqrt3\ket\psi 
= \ket{01} - (\ket0-\ket1)\ket0 
= \frac{\left(\ket+ + \ket-\right)\ket1}{\sqrt2} - \sqrt2\ket{-0}
= \frac1{\sqrt2}\ket{+1} + \frac1{\sqrt2}\ket{-1} - \sqrt2\ket{-0}.
\end{equation}
In the basis \ket{++}, \ket{+-}, \ket{-+}, \ket{--}, we have
\begin{equation}\label{4}
\sqrt3\ket\psi 
= \frac12\ket{++} - \frac12\ket{+-} - \frac12\ket{-+} - \frac32\ket{--}.
\end{equation}

\smallskip\noindent{\tt Contextuality.}
Again, suppose that we work with a noncontextual hidden-variable theory where the measurements are determined locally. Let $f_l(Z)$ and $f_l(X)$ be the sets of values that may occur as the result of the $Z$ measurement and $X$ measurement respectively on the left. Define  $f_r(Z)$ and $f_r(X)$ similarly. 
By the noncontextuality assumption, these sets depend only on what happens on their side of the common source. But this leads to a contradiction. 

Indeed, consider the case $ZZ$, where both modes are 1 and thus observable $Z$ is measured on the left and the right. By \eqref{1}, the conditional probability $P\big[(+1,+1)\given ZZ\big]$ that we have $+1$ on the left and the right is $1/3$. Hence $+1\in f_l(Z)\cap f_r(Z)$. In particular, $+1\in f_l(Z)$. 

By \eqref{4}, in the case $XX$, where both modes are 2 and thus observable $X$ is measured on the left and the right, the conditional probability $\P\big[(+1,+1)\given XX\big]$ that $+1$ is produced on the left and the right is positive (namely $1/12$) and therefore $+1\in f_r(X)$. 

Now let's consider the case $ZX$, where the left mode is 1 and so $Z$ is measured on the left and where right mode is 2 and so $X$ is measured on the right. By noncontextuality, it must be possible to have $+1$ on the left and on the right in the same trial. But this does not happen. For, by \eqref{2}, the conditional probabilities in the $ZX$ case are as follows.
\[ \P\big[(+1,-1)\given ZX\big] = 2/3,\quad \P\big[(-1,+1)\given ZX\big]= 1/6,\quad  \P\big[(-1,-1)\given ZX\big] = 1/6, \]
so that $\P\big[(+1,+1)\given ZX\big]=0$ which gives the desired contradiction.

\smallskip\noindent{\tt Observation space.}
The sample space $\Omega$ consists of 16 outcomes for the combination of the random settings of modes and the observations of the results: $(ZZ,\pm1,\pm1)$, $(ZX,\pm1,\pm1)$, $(XZ,\pm1,\pm1)$ and $(XX,\pm1,\pm1)$. 

In the sense of the observation space, all 16 outcomes are observable. For example, consider the case $ZZ$ where both modes are 1. By \eqref{1}, the conditional probabilities $\P\big[(+1,+1)\given ZZ\big]$, $\P\big[(+1,-1)\given ZZ\big]$, $\P\big[(-1,+1)\given ZZ\big]$ are $1/3$. Accordingly 
\[ \P\big[\{(+1,+1),(+1,-1),(-1,+)\}\given ZZ\big] = 1 \]
and so $\P\big[(-1,-1)\given ZZ\big] = 0$. Thus, in the case $ZZ$, it is impossible to have $-1$ on the left and on the right in the same trial. But, like event $\emptyset$,  the event $\{(ZZ,-1,-1)\}$ is observable in the sense of \S\ref{sec:frame}. 

Thus $\P$ is defined on all events. The extension problem is trivial in Hardy's case, and negative probabilities do not arise.

\begin{remark*}
The presence of contextuality and the absence of negativity seem to contradict 
Robert Spekkens's claim that negativity and contextuality are equivalent forms of nonclassicality \cite{Spekkens}. Earlier, in a lengthy footnote in Section~6 of \cite{G236}, we showed that the equivalence claim is unsubstantiated.
\end{remark*}

\section{Limitation}

The notion of observation spaces was motivated by quantum mechanics with its observables, i.e., Hermitian operators, which may or may not be coobservable, i.e., commeasurable. Unfortunately, as this section shows, this notion is too simplistic to faithfuly model more complicated sets of quantum mechanical observables. 

Let $\cS$ be a set of Hermitian operators on a finite-dimensional
Hilbert space \H, and let $(\Omega,\CO)$ be an
observation frame. By a \emph{model} of $\cS$ in
$(\Omega,\CO)$ we mean a partial function $\mu$ from closed
subspaces of \H\ to subsets of $\Omega$ such that:
\begin{enumerate}
  \item[M1] For each operator $A\in\cS$ and each sum $E$ of eigenspaces
    of $A$, $\mu(E)$ is defined and is an observable event in
    $(\Omega,\mathrm{CO})$. 
  \item[M2] For each operator $A\in\cS$, the collection $\mathcal E_A$ of
    all sums of eigenspaces of $A$ has $\{\mu(E):E\in\mathcal E_A\}$
  coobservable in $(\Omega,\mathrm{CO})$.
\item[M3] If $E$ is a sum of  eigenspaces of some $A\in\cS$, then the
  complementary sum $E^\bot$ satisfies $\mu(E^\bot)=\Omega-\mu(E)$.
\item[M4] If $E_1$ and $E_2$ are sums of eigenspaces of a single $A\in\cS$, then $\mu(E_1+E_2)=\mu(E_1)\cup \mu(E_2)$.
\end{enumerate}

\begin{quote}
\Q Explain M1. Why are we talking about sums of eigenspaces?

\A Given a set $V$ of eigenvalues of a Hermitian operator $A\in\cS$, let $E$ be the sum of the corresponding eigenspaces of $A$. As a closed subspace of the Hilbert space, $E$ is an event. When we measure the Hermitian operator $A$, the result $v$ is one of its eigenvalues. By observing whether $v$ is in $V$ we know whether $E$ occurred. So $E$ should be observable in $(\Omega,\CO)$.

The same justification applies to M2.

\Q Actually M2 implies M1.

\A Almost. M1 contains the requirement that $\mu(E)$ is defined which is used implicitly in M2.
\end{quote}

\begin{theorem}\label{thm:limit}
  There is a finite set $\cS$ of Hermitian operators on $\C^4$
  that admits no model in any observation frame.
\end{theorem}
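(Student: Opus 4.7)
My plan is to use the Mermin--Peres magic square on $\C^2\otimes\C^2=\C^4$ as the backbone of a Kochen--Specker style contradiction, augmented with auxiliary observables that glue together the individually defined eigenspace partitions within each commuting triple. Let $\cS$ contain the nine magic-square operators $A_{ij}$ ($i,j\in\{1,2,3\}$): tensor products of Pauli matrices with $A_{ij}^2=I$, the three operators in every row and every column pairwise commuting, every row having product $+I$, and two columns having product $+I$ while the third has product $-I$. For each row $i$, fix a common orthonormal eigenbasis $\ket{v^{(i)}_1},\ket{v^{(i)}_2},\ket{v^{(i)}_3},\ket{v^{(i)}_4}$ of the three operators in that row and add to $\cS$ a Hermitian $R_i$ on $\C^4$ with four distinct eigenvalues having those vectors as eigenvectors; do the analogous thing column-by-column with operators $C_j$.

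Assume for contradiction a model $\mu$ in some observation frame $(\Omega,\CO)$. For every $A\in\cS$, axioms M1, M3, M4 together imply that $\{\mu(E^A_v):v\text{ is an eigenvalue of }A\}$ is a partition of $\Omega$, so each $\omega\in\Omega$ determines a value $v_\omega(A)$. The crucial step is to show that within a single row the three values $v_\omega(A_{i1}),v_\omega(A_{i2}),v_\omega(A_{i3})$ are jointly realized by a single joint eigenvector. This is what $R_i$ delivers: each one-dimensional subspace $\mathrm{span}\,\ket{v^{(i)}_k}$ is an eigenspace of $R_i$, and the $+1$ eigenspace of $A_{ij}$ is the vector-space sum of those $\mathrm{span}\,\ket{v^{(i)}_k}$ with $A_{ij}\ket{v^{(i)}_k}=+\ket{v^{(i)}_k}$. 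Since $\mu$ is a function on closed subspaces and these two descriptions refer to the same subspace, axiom M4 applied to $R_i$ gives
\[
\mu(E^{A_{ij}}_{+1})=\bigcup_{k\,:\,A_{ij}\ket{v^{(i)}_k}=+\ket{v^{(i)}_k}}\mu(\mathrm{span}\,\ket{v^{(i)}_k}),
\]
so membership of $\omega$ in exactly one of the four events $\mu(\mathrm{span}\,\ket{v^{(i)}_k})$ forces $v_\omega(A_{ij})$ to equal the $A_{ij}$-eigenvalue of $\ket{v^{(i)}_{k_\omega}}$ simultaneously for $j=1,2,3$. Since $A_{i1}A_{i2}A_{i3}=I$ on any joint eigenvector, this yields $v_\omega(A_{i1})v_\omega(A_{i2})v_\omega(A_{i3})=+1$.

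The identical argument using $C_j$ yields $v_\omega(A_{1j})v_\omega(A_{2j})v_\omega(A_{3j})=+1$ for two columns and $=-1$ for the third. Multiplying the nine values once row-by-row and once column-by-column produces $+1$ versus $-1$, the desired contradiction. The main obstacle I anticipate is not the magic-square arithmetic, which is a standard Kochen--Specker input, but the bridging step above: recognizing that axiom M4 invoked on a rank-resolving operator $R_i$ is exactly the lever that glues the individual two-part partitions induced by $A_{i1},A_{i2},A_{i3}$ into a common four-part refinement, enforcing local noncontextuality despite the model axioms constraining only one operator at a time.
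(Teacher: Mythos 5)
Your argument is correct, but it reaches the contradiction through a different Kochen--Specker input than the paper's proof. The paper takes $\cS$ to be nine nondegenerate Hermitian operators, one for each of the nine orthogonal bases in the 18-vector example of Cabello et al.; since every eigenspace is one-dimensional, axioms M1, M3, M4 immediately make each point $\omega\in\Omega$ select exactly one vector from each basis, consistently for vectors shared between two bases (because $\mu$ depends only on the subspace, not on the basis through which it arises), and the contradiction is a counting one: there are nine selections, one per basis, but each selected vector is counted twice since every vector lies in exactly two bases, so $9$ would have to be even. You instead use the Mermin--Peres magic square, whose nine operators are degenerate, and your genuinely new ingredient is the bridging step: adjoining the nondegenerate row and column operators $R_i$, $C_j$ and applying M4 to them, so that the two-cell partitions induced by the $A_{ij}$ in a given context are refined by a common four-cell partition, after which the row products $(+1)^3$ versus the column products $(+1)^2(-1)$ give the parity contradiction. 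Both proofs hinge on the same feature of the model definition --- $\mu$ is a function of the closed subspace alone, which forces noncontextual value assignments --- and both are sound; yours uses $15$ operators and the extra gluing lemma but rests on the more familiar magic square, while the paper's uses only $9$ operators and needs no degeneracy bookkeeping because its operators already have rank-one eigenspaces. Incidentally, once the $R_i$ and $C_j$ are in $\cS$, the nine $A_{ij}$ themselves are dispensable: every sum of eigenspaces of an $A_{ij}$ is already a sum of eigenspaces of the corresponding $R_i$ and of the corresponding $C_j$, so M1 for $R_i$, $C_j$ together with subspace-functionality of $\mu$ yields the row/column consistency of the assigned values directly.
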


\begin{proof}
Consider the 18-vector, 9-basis example of Cabello et al.\ \cite{Cabello} given in color
on the Wikipedia page for ``Kochen-Specker Theorem'' \cite{W-KS}. For each of
the 9 bases $B$ there, invent an operator $S_B$ (Hermitian, with 4 distinct
eigenvalues) whose eigenspaces are exactly the 4 vectors in $B$ (and their
scalar multiples). Let $\cS$ be the set of these 9 $S_B$'s, and suppose, toward
a contradiction, that $\mu$ were a model of $\cS$ in $(\Omega,\CO)$.

Let $B$ be any one of the nine bases, and consider the sums $E$ of eigenspaces of $S_B$, i.e., the ``coordinate
subspaces'' of $\C^4$ with respect to the basis $B$. They
constitute a Boolean algebra $\BA(B)$ and, by the requirements for a
model, their $\mu$-images are defined and constitute a Boolean
subalgebra of $2^\Omega$; furthermore, when restricted to 
$\BA(B)$, $\mu$ is a Boolean homomorphism. 
The atoms of $\BA(B)$ are the eigenspaces themselves. The $\mu$-image of $\BA(B)$ is a finite Boolean algebra of events in $(\Omega,\CO)$. Its atoms are the $\mu$-images of the four eigenspaces. So each $\omega\in\Omega$ is in exactly one of those four $\mu$-images.  

Fix some point $\omega\in\Omega$. 
It selects, for each of the 9 bases $B$, exactly
one of the 4 vectors in $B$.  Furthermore, when the same vector occurs
in two bases, then it is selected from one if and only if it is selected from the
other. This is because for $\vec v$ to be selected means that (writing
$\langle \vec v\rangle$ for the subspace generated by $\vec v$)
$\omega\in \mu\langle \vec v\rangle$, and this doesn't depend on the
rest of a basis.

But the point of the example of Cabello et al.\ is precisely that such a
selection is impossible. Indeed, we would have 9 selections, one from each
basis, but every selection would occur twice because, as indicated by
the colors in \cite{W-KS}, every vector occurs in exactly two bases. So 9 would have
to be even.
\end{proof}

The $\C^4$ in the theorem can be improved to $\C^3$ at the cost of using a more complicated example, as in the original Kochen-Specker proof.

\begin{quote}
\Q This is too bad that there is such a brutal limitation on
observation spaces. And, since the theorem speaks about modeling in an observation frame, rather than observation space, negative probabilities are irrelevant.

\A It is not completely obvious to us at this point that negative probabilities are irrelevant. We can weaken conditions M3 and M4 by requiring that the equalities hold only up to an error of probability zero. The resulting weaker models would be defined not in observation frames but in
observation spaces because the weaker notion involves probabilities.  

If we stick to nonnegative probabilities, this doesn't buy us anything. But once negative probabilities enter the picture, errors of probability zero become more complicated. They may involve cancellation between
outcomes of positive probability and outcomes of negative probability. 

\Q I have another question. It seems that the proof of the limitative Theorem~\ref{thm:limit} crucially uses the existence of points in observation frames. Did you consider working more abstractly with just observable events and coobservable sets of events?

\A Yes, a little. We found that a ``pointless'' abstraction of observation frames seems to lead naturally to orthomodular lattices, extensively studied (though not by us) in quantum logic \cite{BvN}.
\end{quote}

\section*{Future work}

We introduced observation spaces and the extension problem. Observation spaces allow us to model some interesting quantum mechanical situations. And the extension problem amounts to asking when does one need negative probabilities and what can one accomplish with negative probabilities. This is a small step toward understanding what negative probabilites can be used for.  Observation spaces do not allow us to model straightforwardly the more complicated situations involved in proofs of the Kochen-Specker theorem. We'd like to understand what the next step should be.

\end{document}